\documentclass[letterpaper, 10 pt, conference]{ieeeconf}
\IEEEoverridecommandlockouts
\usepackage{amsmath,amsfonts,amssymb}
\usepackage{graphicx}
\usepackage{cite}
\usepackage{bm}
\usepackage{color}
\usepackage{algorithm}
\usepackage{algorithmic}
\usepackage{booktabs}
\usepackage{multirow}

\newtheorem{theorem}{Theorem}

\newtheorem{assumption}{Assumption}
\newtheorem{remark}{Remark}

\title{\LARGE \bf
Chance-Constrained Nonlinear Covariance Control via Robust Linearization Remainder Bounds
}

\author{Man Jun Koh \and Hyochoong Bang \and SooJean Han%
\thanks{This research was supported by the Space-K BIG Project Program funded by the Korea AeroSpace Administration (KASA) (Grant No. RS-2025-16063273).}%
\thanks{Man Jun Koh and Hyochoong Bang are with the Department of Aerospace Engineering, and SooJean Han is with the School of Electrical Engineering, Korea Advanced Institute of Science and Technology (KAIST), Daejeon, Republic of Korea, 34141. Email: {\tt\small \{mjk18, hcbang, soojean\}@kaist.ac.kr}}%
}

\begin{document}
\maketitle
\thispagestyle{empty}
\pagestyle{empty}

\begin{abstract}
When dealing with nonlinear systems, classical covariance steering typically propagates uncertainty via first-order linearizations, discarding higher-order Taylor remainders. This truncation causes computed statistical moments to diverge from the true physical state distribution, often leading to chance constraint violations. This paper introduces a discrete-time Sequential Convex Programming (SCP) framework that casts the deterministic one-step nonlinear numerical map as a Linear Stochastic Inclusion. The The Taylor remainder is modeled as a state-independent unstructured uncertainty block, bounded over a uniform envelope. The second-moment tubes are propagated via what we refer to as a robust Stochastic Linear Matrix Inequality (S-LMI) derived from the Petersen's lemma, providing an upper bound on the expected uncentered second moment. Domain-exit risk is bounded analytically via a Markov trace inequality, and spatial chance constraints are enforced via Gauss unimodal second-moment bounds within a Difference-of-Convex program. Simulations on a state-dependent nonlinear dynamic system demonstrate constraint satisfaction.
% , whereas Gaussian-propagation baseline fails.
\end{abstract}

\section{Introduction}
Stochastic optimal control of autonomous systems requires satisfying spatial and actuator chance constraints subject to process noise and dynamic nonlinearities. For linear systems with additive Gaussian disturbance, covariance steering provides framework that is tractable by casting the problem as a deterministic optimization over the first two statistical moments\cite{HotzSkelton1987,Chen2016a,Bakolas2018,Okamoto2018}. Recent advances include lossless convexification of the covariance dynamics \cite{liu2024optimal}, computationally efficient formulations that scale linearly with discretization nodes \cite{Kumagai2025,pilipovsky2024cdc} and iterative risk allocation that optimizes per-constraint risk budgets\cite{pilipovsky2021}, and distributionally robust chance constraint certificates \cite{renganathan2023}.
 
Extending covariance steering to nonlinear systems introduces a fundamental difficulty; mapping a probability distribution through nonlinear dynamics couples the mean and covariance, and the resulting moment evolution cannot be computed in closed form. The main approach in the Sequential Convex Programming (SCP) literature linearizes the dynamics about a reference trajectory and propagates covariance via the resulting Jacobian \cite{Ridderhof2019,Lew2020,Benedikter2022,Oguri2024,Kumagai2025}. This first-order approximation discards the higher-order Taylor remainder, implicitly assuming it is negligible. Under state-dependent nonlinearities, the truncation can cause the propagated covariance to diverge from the true physical dispersion, potentially violating chance constraint guarantees.

Several approaches mitigate this linearization error through higher-fidelity moment propagation. The Unscented Transform in greedy \cite{bakolas2020greedy} and learned-dynamics \cite{tsolovikos2021} covariance steering, tube stochastic optimal control with chance-constrained inputs  \cite{ozaki2020}, generalized Polynomial Chaos expansions within SCP \cite{nakka2022}, direct policy optimization over sigma-point trajectories \cite{Howell2021}, nonlinear covariance control via differential dynamic programming \cite{yi2020nonlinear}, continuous-time path distribution optimization via the Girsanov theorem \cite{yu2021}, error bounding through state transition tensors \cite{Qi2025nonlinear}, and non-Gaussian distribution steering via conjugate unscented transformations \cite{Qi2025nongaussian}. These methods improve moment estimation accuracy, but the approximation error between propagated moments and the true distribution is not explicitly bounded. If local nonlinearities become execessive with respect to the modeled expansion order, the estimated moments can potentially underestimate the physical dispersion without warning.
 
Rather than improving the accuracy of moment propagation, we bound the
worst-case effect of the discarded nonlinearity on the propagated moments.
Inspired by the incremental quadratic constraint ($\delta$QC) framework
\cite{Acikmeşe2011,kim2024}, we model the Taylor remainder as an unstructured
norm-bounded uncertainty block and cast the incremental dynamics as a pathwise
linear stochastic inclusion. The contribution of this paper is twofold:
\begin{enumerate}
    \item A discrete-time framework that integrates the nonlinear dynamics via
    a deterministic numerical map and introduces stochasticity strictly at the
    discrete level. This renders the Taylor remainder deterministic, enabling
    its factorization as a norm-bounded uncertainty block with a uniform
    envelope while preserving the martingale difference properties of the
    additive noise.
    \item A robust stochastic LMI (S-LMI), derived via Petersen's lemma
    \cite{Petersen1987}, that propagates a positive-definite decision variable
    upper-bounding the uncentered second moment of the killed
    (domain-truncated) state process for all norm-bounded, state-independent
    remainder realizations, without distributional assumptions on the
    additive noise.
\end{enumerate}
Domain-exit risk is then bounded by applying Markov's inequality to the killed
process, and spatial chance constraints are enforced within a
difference-of-convex (DC) program through Gauss unimodal bounds, which sharpen
the Chebyshev tail bound by a factor of $4/9$ at the cost of a mild
unimodality assumption.
% Chance constraints employ the Gauss inequality for unimodal distributions, providing tighter feasibility regions than Chebyshev bounds while retaining distribution-agnostic validity.

% This framework occupies a distinct position relative to both robust tube MPC and stochastic covariance steering. Unlike robust tube MPC, which enforces geometric containment against bounded deterministic disturbances via Minkowski sums, the S-LMI propagates a probabilistic moment bound that accommodates unbounded stochastic noise. Unlike standard SCP-based covariance steering, which trusts the Jacobian-propagated covariance as exact, the S-LMI explicitly accounts for the remainder-induced moment growth.

% The specific contributions of this paper are:
% \begin{enumerate}
%     \item A discrete-time framework that shifts stochastic injection strictly to the discrete-time sequence, bounding the continuous remainder while preserving martingale difference properties.
%     \item A robust S-LMI that propagates the active uncentered covariance as a decision state, upper-bounding the physical moment without Gaussian assumptions.
%     \item Analytical domain-exit risk bounds formulated via Markov's inequality, decomposing spatial constraints into boundary violation and exit-risk budgets.
%     \item Spatial chance constraints via the Gauss inequality 
% for unimodal distributions, enforced through 
% Difference-of-Convex programs with positive interior margins, 
% reducing conservatism by $2.25\times$ relative to Chebyshev 
% while retaining distribution-agnostic guarantees.
% \end{enumerate}

\section{Problem Formulation}

\subsection{System Dynamics and Discrete Stochasticity}
To prevent continuous-time integration inconsistencies, we start with the deterministic Ordinary Differential Equation (ODE) to model the underlying continuous-time physics:
\begin{equation}
    \dot{x}(t) = f_c(x(t)) + B_c(t) u(t) + E_c \tilde{f}_c(x(t))
    \label{eq:ode}
\end{equation}
where $x(t) \in \mathbb{R}^{n_x}$ is the state vector, $u(t) \in \mathbb{R}^{n_u}$ is the control input, and $E_c \in \mathbb{R}^{n_x \times n_p}$ isolates the unmodeled nonlinear channels $\tilde{f}_c(x(t))$. 

To avoid continuous-time ZOH mapping errors, we partition the horizon into uniform intervals $\Delta t_k$. We adopt a deterministic numerical one-step map $f_d$ (e.g., Runge-Kutta 4) to integrate \eqref{eq:ode}. Then we introduce stochasticity explicitly at the discrete-time level to model unmeasured disturbances integrated over $\Delta t_k$:
\begin{equation}
    x_{k+1} = f_d(x_k, u_k) + w_{d,k}
    \label{eq:discrete_exact}
\end{equation}

\begin{assumption}[Discrete Martingale Difference Noise]
Let $\{\mathcal{F}_k\}_{k \in \mathbb{N}}$ denote the discrete filtration generated by the state and disturbance history up to step $k$. The additive noise $w_{d,k}$ forms a martingale difference sequence adapted to $\mathcal{F}_{k+1}$, satisfying $\mathbb{E}[w_{d,k} \mid \mathcal{F}_k] = 0$ almost surely, with a known conditional covariance bound $\mathbb{E}[w_{d,k} w_{d,k}^T \mid \mathcal{F}_k] \preceq W_k$, where $W_k \succeq 0$\cite{Durrett2019}.
\end{assumption}

\subsection{Chance-Constrained Optimal Control Problem}
The stochastic optimization problem over discrete indices $k \in \{0, \dots, N\}$ requires synthesizing a state-feedback policy $u_k = \pi_k(x_k)$ that minimizes expected costs subject to some kind of probabilistic constraints:
\begin{align}
    \min_{\pi_k(\cdot)} \quad & \mathbb{E} \left[ \ell_f(x_N) + \sum_{k=0}^{N-1} \ell_c(x_k, u_k) \Delta t_k \right] \label{eq:soc_cost} \\
    \text{s.t.} \quad & x_{k+1} = f_d(x_k, u_k) + w_{d,k}, \quad u_k = \pi_k(x_k) \nonumber \\
    & \mathbb{P}\left( h_{x,i}^T x_k \le b_{x,i} \right) \ge 1 - \epsilon_{x,i}, \quad \forall k, i \label{eq:soc_chance_x} \\
    & \mathbb{P}\left( h_{u,j}^T u_k \le b_{u,j} \right) \ge 1 - \epsilon_{u,j}, \quad \forall k, j \label{eq:soc_chance_u} \\
    & \mathbb{E}[x_0] = \mu_0, \quad \mathbb{E}[(x_0-\mu_0)(x_0-\mu_0)^T] \preceq \Sigma_0 \label{eq:soc_init} \\
    & \mathbb{E}[x_N] = \mu_f, \hspace{0.5em} 
    \mathbb{E}[(x_N-\mu_f)(x_N-\mu_f)^T] \preceq \Sigma_f\label{eq:soc_terminal}
\end{align}
where $h_{x,i}, b_{x,i}$ define spatial half-space boundaries, $h_{u,j}, b_{u,j}$ define control limits, and $\epsilon \in (0, 0.5]$ denotes the risk or failure probability budget of violating those constraints.

\section{Incremental Dynamics and Local Envelopes}

\subsection{Separation of Nominal and Stochastic Dynamics}
Global bounding of highly nonlinear system is generally intractable. We therefore decompose the physical process into a deterministic nominal reference and a localized stochastic increments. Let $\hat{z}_k =[\hat{x}_k^T, \hat{u}_k^T]^T$ denote the reference trajectory from the previous iteration of sequential convex programming (SCP), and $\bar{z}_k =[\bar{x}_k^T, \bar{u}_k^T]^T$ denote the current decision variables for the nominal update. The stochastic incremental dynamics are defined as $\eta_k = x_k - \bar{x}_k$ and $\xi_k = u_k - \bar{u}_k$. 

Applying the linear feedback policy $\xi_k = K_k \eta_k$ and a multivariate Taylor expansion of $f_d(x_k, u_k)$ about $\hat{z}_k$ yields the exact mapping:
\begin{align}
    x_{k+1} &= f_d(\hat{z}_k) + J_{x,k} (x_k - \hat{x}_k) + J_{u,k} (u_k - \hat{u}_k) \nonumber \\
    &\quad + E_{k} r_k(x_k, u_k) + w_{d,k}
\end{align}
where $J_{x,k} = \nabla_x f_d|_{(\hat{z}_k)}$, $J_{u,k} = \nabla_u f_d|_{(\hat{z}_k)}$, and $r_k$ is the remainder.

During SCP iterations, we enforce the nominal kinematic update utilizing non-negative virtual control slacks $\nu_k \triangleq v_{def,k}^+ - v_{def,k}^-$ for feasibility\cite{Malyuta2022}:
\begin{equation}
    \bar{x}_{k+1} = f_d(\hat{z}_k) + J_{x,k} (\bar{x}_k - \hat{x}_k) + J_{u,k} (\bar{u}_k - \hat{u}_k) + \nu_k
\end{equation}

\begin{remark}
At exact SCP convergence, the dynamic $L_1$ penalty function ensures $\nu_k \to 0$~\cite{Mao2016}. Our theoretical second-moment certificates apply upon this convergence, ensuring martingale structures are not affected by slack mean-shifts.
\end{remark}

Assuming kinematic convergence ($\nu_k = 0$), the incremental dynamics is written as follows:
\begin{equation}
    \eta_{k+1} = (J_{x,k} + J_{u,k} K_k) \eta_k + E_{k} r_k(x_k, u_k) + w_{d,k}
    \label{eq:discrete_sde}
\end{equation}
And the closed-loop state transition matrix is written as $\Phi_{cl,k}(K_k) = J_{x,k} + J_{u,k} K_k$. 

\subsection{Local Trust-Region Bounding Sets}
We define the spatial validity domain $\mathcal{E}_k(\alpha_{tr})$ using the surrogate uncentered second-moment $\hat{Q}_k$ from the previous SCP. Using a truncation parameter $\alpha_{trunc}$ that defines the Mahalanobis radius $R_{max,k}^2 = n_x / \alpha_{trunc}$, relaxed by $\alpha_{tr} \ge 1$:
\begin{equation}
    \mathcal{E}_k(\alpha_{tr}) \triangleq \left\{ \eta \in \mathbb{R}^{n_x} : \eta^T \hat{Q}_k^{-1} \eta \le R_{max,k}^2 \alpha_{tr} \right\}
\end{equation}
The local validity boundary is parameterized via the regularized invariant geometric matrix $S_k \triangleq R_{max,k}^2 \alpha_{tr} (\hat{Q}_k + \varepsilon_S I)$ with $\varepsilon_S > 0$. We isolate the state deviation via $z_k = (C_{x,k} + C_{u,k} K_k) \eta_k \triangleq C_{cl,k}(K_k) \eta_k$ where $C_{x,k}, C_{u,k}$ are selection matrices.

\begin{assumption}[Unstructured Norm-Bounded Remainder] \label{assump:local_remainder}
For all states residing within $\eta \in \mathcal{E}_k(\alpha_{tr})$, the numerical remainder $r_k$ is factored via an unknown matrix $\Delta_k \in \mathbb{R}^{n_p \times (n_x + n_u)}$\cite{Acikmeşe2011,Kim2025}:
\begin{equation}
    r_k = \Delta_k \hat{\Lambda}_k^{tr} C_{cl,k}(K_k) \eta_k
\end{equation}
where $\Delta_k$ is an unknown deterministic matrix satisfying $\|\Delta_k\|_2 \le 1$, held constant over each step and independent of the realization of $\eta_k$. The fixed diagonal matrix $\hat{\Lambda}_k^{tr} \succ 0$ constitutes a uniform deterministic envelope upper-bounding the local numerical Hessian supremum over $\mathcal{E}_k(\alpha_{tr})$.
\end{assumption}

\begin{remark}[Scope] \label{rem:scope}
State-independence of $\Delta_k$ renders \eqref{eq:discrete_sde} a linear stochastic inclusion with exact moment propagation. The exact Taylor remainder is state-dependent; extending Theorem~\ref{thm:discrete_lmi} to state-correlated realizations requires a modified moment recursion and is left to future work. $\hat{\Lambda}_k^{tr}$ is evaluated at the previous SCP iteration (Algorithm~\ref{alg:scp}, line~4).
\end{remark}

\section{The Robust S-LMI}

To evaluate unconditional expectations of \eqref{eq:discrete_sde} over an unbounded probability space without violating local bounds, we utilize a stopped process. Let the domain-exit stopping time be $\tau \triangleq \inf\{j \ge 1 : \eta_j \notin \mathcal{E}_j(\alpha_{tr})\}$. We isolate interior dispersion via the killed process $\tilde{\eta}_k \triangleq \eta_k \mathbf{1}_{\{\tau > k\}}$. Prior to exit, $\tilde{\eta}_k \tilde{\eta}_k^T = \eta_k \eta_k^T \mathbf{1}_{\{\tau > k\}} \preceq S_k$ holds almost surely.

\begin{theorem}[Robust S-LMI] \label{thm:discrete_lmi}
Assume $\bar{x}_0 = \mu_0$ and $Q_0 = \Sigma_0$. Let the linear synthesis policy be parameterized by the active decision matrix via $L_k = K_k Q_k$. Assuming algorithmic convergence ($\nu_k \to 0$), if there exist symmetric positive-definite decision matrices $Q_{k+1} \succ 0$, $Q_k \succ 0$, variable $L_k$, and a scalar multiplier $m_k > 0$ satisfying the discrete Stochastic Linear Matrix Inequality (S-LMI):
\begin{equation}
\resizebox{\columnwidth}{!}{$
\begin{bmatrix} 
Q_{k+1} - W_k - m_k E_k E_k^T & J_{x,k} Q_k + J_{u,k} L_k & 0 \\ 
* & Q_k & (C_{x,k} Q_k + C_{u,k} L_k)^T \hat{\Lambda}^{tr}_k \\ 
* & * & m_k I_{n_p} 
\end{bmatrix} \succeq 0
$}
\label{eq:slmi_anisotropic}
\end{equation}
alongside the local bounding constraint $Q_k \preceq S_k$, then the sequence $Q_k$ upper-bounds the expected uncentered second moment of the killed process: $\mathbb{E}[\tilde{\eta}_k \tilde{\eta}_k^T] \preceq Q_k$.
\end{theorem}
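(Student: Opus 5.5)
The proof goes by induction on $k$, with the S-LMI \eqref{eq:slmi_anisotropic} turned into a one-step matrix inequality $Q_{k+1} \succeq \mathcal{A}_k Q_k \mathcal{A}_k^T + W_k$ that holds uniformly over admissible $\Delta_k$. We then push this inequality through the moment recursion of the killed process.

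\textbf{Base case.} Since $\bar{x}_0=\mu_0$, we have $\eta_0 = x_0-\mu_0$. Because $\tilde{\eta}_0 \tilde{\eta}_0^T \preceq \eta_0\eta_0^T$, \eqref{eq:soc_init} gives $\mathbb{E}[\tilde{\eta}_0\tilde{\eta}_0^T]\preceq \Sigma_0 = Q_0$.

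\textbf{Algebraic step.} Take the Schur complement of \eqref{eq:slmi_anisotropic} with respect to the $(2,2)$ block $Q_k\succ 0$, substitute $L_k=K_kQ_k$, and write $\Phi_k=\Phi_{cl,k}(K_k)$ and $C_k=C_{cl,k}(K_k)$. This yields
\begin{equation*}
\begin{bmatrix} Q_{k+1}-W_k-m_kE_kE_k^T-\Phi_kQ_k\Phi_k^T & -\Phi_kQ_kC_k^T\hat{\Lambda}^{tr}_k\\ * & m_kI-\hat{\Lambda}^{tr}_kC_kQ_kC_k^T\hat{\Lambda}^{tr}_k\end{bmatrix}\succeq 0 .
\end{equation*}
This is exactly the form to which Petersen's lemma \cite{Petersen1987} applies. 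It states that for $G=E_k$ and $H=\hat{\Lambda}^{tr}_kC_k$,
\begin{equation*}
(\Phi_k+G\Delta H)Q_k(\Phi_k+G\Delta H)^T \preceq \Phi_kQ_k\Phi_k^T + m_kGG^T + \text{(terms above)}
\end{equation*}
for all $\|\Delta\|_2\le 1$ if and only if such an $m_k>0$ exists. The direct route is cleaner: expand
\begin{equation*}
\begin{bmatrix} I & E_k\Delta_k\end{bmatrix} M \begin{bmatrix} I & E_k\Delta_k\end{bmatrix}^T \succeq 0,
\end{equation*}
where $M$ is the original $3\times3$ block matrix after eliminating the middle row by a congruence. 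Together with $\Delta_k^T\Delta_k\preceq I$, this gives
\begin{equation*}
Q_{k+1}-W_k \succeq \mathcal{A}_kQ_k\mathcal{A}_k^T, \qquad \mathcal{A}_k \triangleq \Phi_k + E_k\Delta_k\hat{\Lambda}^{tr}_kC_k,
\end{equation*}
for every admissible $\Delta_k$. I will verify this congruence carefully, since signs and the placement of $\hat{\Lambda}^{tr}_k$ must match Assumption~\ref{assump:local_remainder}.

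\textbf{Moment recursion.} On $\{\tau>k\}$, $\eta_k\in\mathcal{E}_k(\alpha_{tr})$, so Assumption~\ref{assump:local_remainder} applies and \eqref{eq:discrete_sde} becomes $\eta_{k+1}=\mathcal{A}_k\eta_k + w_{d,k}$ with a deterministic $\mathcal{A}_k$. We then use
\begin{equation*}
\mathbf{1}_{\{\tau>k+1\}}\le \mathbf{1}_{\{\tau>k\}}, \qquad \tilde{\eta}_{k+1}\tilde{\eta}_{k+1}^T \preceq \mathbf{1}_{\{\tau>k\}}\,\eta_{k+1}\eta_{k+1}^T,
\end{equation*}
and the $\mathcal{F}_k$-measurability of $\mathbf{1}_{\{\tau>k\}}$. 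Conditioning on $\mathcal{F}_k$ kills the cross terms, because $\mathbb{E}[w_{d,k}\mid\mathcal{F}_k]=0$ and $\mathcal{A}_k$ is state-independent. Hence
\begin{equation*}
\mathbb{E}[\tilde{\eta}_{k+1}\tilde{\eta}_{k+1}^T] \preceq \mathcal{A}_k\,\mathbb{E}[\tilde{\eta}_k\tilde{\eta}_k^T]\,\mathcal{A}_k^T + \mathbb{P}(\tau>k)\,W_k .
\end{equation*}
Since $\mathbb{P}(\tau>k)\le 1$ and $W_k\succeq 0$, the last term is at most $W_k$. Using the induction hypothesis and monotonicity of $X\mapsto \mathcal{A}X\mathcal{A}^T$, the right-hand side is bounded by $\mathcal{A}_kQ_k\mathcal{A}_k^T+W_k\preceq Q_{k+1}$, which closes the induction. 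The constraint $Q_k\preceq S_k$ is not needed for the bound itself; it only keeps the certificate consistent with the domain on which $\hat{\Lambda}^{tr}_k$ was computed.

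\textbf{Main obstacle.} The hardest step is the moment recursion. One has to justify treating $\Delta_k$ as a fixed matrix on the event $\{\tau>k\}$ even though it formally depends on the trajectory. This relies entirely on the state-independence stipulated in Assumption~\ref{assump:local_remainder} and Remark~\ref{rem:scope}. A secondary obstacle is checking that the congruence in the algebraic step is valid for the indefinite Schur form.
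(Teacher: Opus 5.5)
Your proposal is correct and follows essentially the same route as the paper. The S-LMI is converted into the robust one-step bound $\mathcal{A}_k Q_k \mathcal{A}_k^T + W_k \preceq Q_{k+1}$ for all $\|\Delta_k\|_2 \le 1$; your Schur complement on the $Q_k$ block followed by the congruence with $[\,I \;\; E_k\Delta_k\,]$ and $\Delta_k\Delta_k^T \preceq I$ is simply a direct proof of the sufficiency half of the Petersen lemma the paper cites. This bound is then pushed through the killed process using $\mathbf{1}_{\{\tau>k+1\}} \le \mathbf{1}_{\{\tau>k\}}$, the vanishing martingale-difference cross terms, the tower property and induction.

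Your explicit base case and the $\mathbb{P}(\tau>k)\,W_k$ term make the argument tidier than the paper's. However, because $\tau$ is an infimum over $j \ge 1$, the event $\{\tau>0\}$ does not give $\eta_0 \in \mathcal{E}_0(\alpha_{tr})$. Invoking Assumption~\ref{assump:local_remainder} at $k=0$ therefore needs an extra hypothesis, e.g., $\eta_0 \in \mathcal{E}_0(\alpha_{tr})$ almost surely, or letting $\tau$ range over $j \ge 0$. The paper's proof passes over this point in exactly the same way.
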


\begin{proof}
Substituting Assumption \ref{assump:local_remainder} into \eqref{eq:discrete_sde} constructs the uncertain closed-loop transition matrix $A_k(\Delta_k) \triangleq \Phi_{cl,k}(K_k) + E_k \Delta_k \hat{\Lambda}^{tr}_k C_{cl,k}(K_k)$. We seek to enforce the point-wise condition $A_k(\Delta_k) Q_k A_k(\Delta_k)^T + W_k \preceq Q_{k+1}$ across all admissible $\|\Delta_k\|_2 \le 1$.

By applying Petersen's lemma and the S-procedure \cite{Petersen1986,Boyd1994}, this block matrix inequality holds if there exist $m_k > 0$ such that \eqref{eq:slmi_anisotropic} is positive semi-definite. The substitution $L_k = K_k Q_k$ resolves the bilinear dependencies. Because indicator functions obey $\mathbf{1}_{\{\tau > k+1\}} \le \mathbf{1}_{\{\tau > k\}}$, the killed process complies with $\tilde{\eta}_{k+1} \tilde{\eta}_{k+1}^T \preceq \eta_{k+1} \eta_{k+1}^T \mathbf{1}_{\{\tau > k\}}$. Taking the conditional expectation with respect to the filtration $\mathcal{F}_k$:
\begin{align}
    \mathbb{E}[\tilde{\eta}_{k+1} \tilde{\eta}_{k+1}^T \mid \mathcal{F}_k] &\preceq \mathbf{1}_{\{\tau > k\}} \mathbb{E}[ \eta_{k+1} \eta_{k+1}^T \mid \mathcal{F}_k ] \nonumber \\
    &\preceq A_k(\Delta_k) \tilde{\eta}_k \tilde{\eta}_k^T A_k(\Delta_k)^T + W_k
\end{align}
The martingale difference cross-terms vanish. Applying the tower property yields the unconditional bound:
\begin{equation}
    \mathbb{E}[\tilde{\eta}_{k+1} \tilde{\eta}_{k+1}^T] \preceq \mathbb{E}[ A_k(\Delta_k) \mathbb{E}[\tilde{\eta}_k \tilde{\eta}_k^T] A_k(\Delta_k)^T ] + W_k
\end{equation}
Assuming by induction that $\mathbb{E}[\tilde{\eta}_k \tilde{\eta}_k^T] \preceq Q_k$, the deterministic bound guarantees $\mathbb{E}[\tilde{\eta}_{k+1} \tilde{\eta}_{k+1}^T] \preceq Q_{k+1}$.
\end{proof}

\section{Risk Allocation and Chance Constraints}

\subsection{Exit Risk Bound via Markov's Inequality}
To bound the probability of violating $h_{x,i}^T x_k > b_{x,i}$ at step $k$, we decompose via Boole's inequality:
\begin{equation}
    \mathbb{P}(h_{x,i}^T x_k > b_{x,i}) \le \underbrace{\mathbb{P}\big(h_{x,i}^T (\bar{x}_k + \tilde{\eta}_k) > b_{x,i}\big)}_{\text{In-Domain Violation}} + \underbrace{\mathbb{P}(\tau \le k)}_{\text{Exit Risk}}
    \label{eq:chance_decomp}
\end{equation}

The cumulative exit probability is bounded by a union over disjoint exit events. The local domain $\mathcal{E}_j$ is defined by $\eta_j^T \hat{Q}_j^{-1} \eta_j \le R_{max,j}^2 \alpha_{tr}$. Applying the scalar Markov inequality to the trace of the pre-exit process \cite{Durrett2019}:
\begin{equation}
\resizebox{\columnwidth}{!}{$
    \mathbb{P}(\tau \le k) = \sum_{j=1}^k \mathbb{P}(\tau = j) \le \sum_{j=1}^k \frac{\mathbb{E}[\text{Tr}(\hat{Q}_j^{-1} \eta_j \eta_j^T \mathbf{1}_{\{\tau > j-1\}})]}{R_{max,j}^2 \alpha_{tr}} \le \sum_{j=1}^k \frac{\text{Tr}(\hat{Q}_j^{-1} Q_j)}{R_{max,j}^2 \alpha_{tr}}
    $}
    \label{eq:exit_union}
\end{equation}
To ensure the cumulative exit risk respects the budget $\epsilon_{exit}$ over horizon $N$, we enforce a per-step threshold for $k = 1, \ldots, N$:
\begin{equation}
    \text{Tr}(\hat{Q}_k^{-1} Q_k) \le \frac{\epsilon_{exit}}{N} \cdot R_{max,k}^2 \alpha_{tr}
    \label{eq:markov_exit}
\end{equation}
The uniform allocation $\epsilon_{exit}/N$ is conservative relative to an optimized per-step allocation, but eliminates $N$ additional decision variables and preserves the convex subproblem structure. The remaining spatial and actuator budgets are:
\begin{equation}
    \epsilon_{c,x,i} = \epsilon_{x,i} - \epsilon_{exit} > 0, \quad \epsilon_{c,u,j} = \epsilon_{u,j} - \epsilon_{exit} > 0
    \label{eq:risk_budgets}
\end{equation}

\subsection{Spatial and Control Chance Constraints}

\begin{assumption}[Unimodal Spatial Dispersion]
\label{asm:unimodal}
Over short discrete intervals $\Delta t_k$, continuous physical systems subject to bounded nonlinearities and additive unimodal process noise preserve a continuous, unimodal state distribution. We assume the projection $h_{x,i}^T \tilde{\eta}_k$ is continuous and unimodal with mode at zero. The stopping time $\tau$ asymmetrically truncates the tails, but for small exit-risk budgets ($\epsilon_{exit} \ll 1$), the truncated mass is $O(\epsilon_{exit})$ and the mode perturbation is negligible. If unimodality is violated, the framework defaults to a Chebyshev bound, recovering standard distribution-agnostic guarantees.
\end{assumption}

Under Assumption~\ref{asm:unimodal}, we evaluate the spatial margin $m_{i,k}^x \triangleq b_{x,i} - h_{x,i}^T \bar{x}_k > 0$ using the Gauss inequality for unimodal distributions \cite{Dharmadhikari1988}. Because the mode resides at zero, the two-sided Gauss bound depends on the uncentered second moment, circumventing the mean-shift induced by asymmetric killing. Given $Q_k \succeq \mathbb{E}[\tilde{\eta}_k \tilde{\eta}_k^T]$:
\begin{equation}
\resizebox{\columnwidth}{!}{$
    \mathbb{P}(h_{x,i}^T \tilde{\eta}_k > m_{i,k}^x) \le \mathbb{P}(|h_{x,i}^T \tilde{\eta}_k| \ge m_{i,k}^x) \le \frac{4}{9} \frac{\mathbb{E}[(h_{x,i}^T \tilde{\eta}_k)^2]}{(m_{i,k}^x)^2} \le \frac{4}{9} \frac{h_{x,i}^T Q_k h_{x,i}}{(m_{i,k}^x)^2}
    $}
    \label{eq:gauss_bound}
\end{equation}
The Gauss bound \eqref{eq:gauss_bound} requires $(m_{i,k}^x)^2 \ge \tfrac{4}{3}\, h_{x,i}^T Q_k h_{x,i}$. This condition is implied by \eqref{eq:mean_free_bound} whenever $\epsilon_{c,x,i} \le 1/3$, which holds for any practical risk budget.

Restricting \eqref{eq:gauss_bound} below the risk threshold $\epsilon_{c,x,i}$ yields:
\begin{equation}
    h_{x,i}^T Q_k h_{x,i} \le \kappa_{g} \cdot \epsilon_{c,x,i} (b_{x,i} - h_{x,i}^T \bar{x}_k)^2 \label{eq:mean_free_bound}
\end{equation}
where $\kappa_{g} \triangleq 9/4 = 2.25$. Compared to the standard Chebyshev bound ($\kappa_g = 1$), unimodality permits $2.25\times$ larger projected second moment before the constraint activates.

\begin{remark}[Connection to Robust Tube MPC and DRO]
\label{rem:tube_dro}
Robust tube MPC enforces geometric containment via Minkowski sums against bounded deterministic disturbances. Our S-LMI framework instead propagates a probabilistic moment bound, accommodating unbounded stochastic noise. Constraint \eqref{eq:mean_free_bound} evaluates risk analogously to moment-based DRO \cite{Calafiore2006},but restricting the ambiguity set to unimodal distributions via $\kappa_g$ reduces the artificial trajectory compression typical of pure robust frameworks while retaining distribution-agnostic validity.
% This expands the feasible variance by $2.25\times$ relative to Chebyshev, reducing the artificial trajectory compression typical of pure robust frameworks while retaining distribution-agnostic validity.
\end{remark}

We linearize the concave term $(m_{i,k}^x)^2$ around the previous reference $\hat{m}_{i,k}^x$ via the Convex-Concave Procedure (CCP) with slack $s_{i,k}^x \ge 0$ \cite{Yuille2003}:
\begin{equation}
    h_{x,i}^T Q_k h_{x,i} \le \kappa_{g} \cdot \epsilon_{c,x,i} \Big[ (\hat{m}_{i,k}^x)^2 + 2 \hat{m}_{i,k}^x (m_{i,k}^x - \hat{m}_{i,k}^x) \Big] + s_{i,k}^x \label{eq:ccp_var_x}
\end{equation}

\begin{remark}[Margin Positivity]
\label{rem:margin}
The probability bound \eqref{eq:gauss_bound} requires $m_{i,k}^x > 0$. If $m_{i,k}^x \le 0$, the squared margin $(m_{i,k}^x)^2$ creates a false feasible region. The CCP linearization \eqref{eq:ccp_var_x} does not suffer this artifact---it evaluates negatively when the margin drops below zero while the reference remains positive---but strict positivity is enforced to preserve the validity of \eqref{eq:gauss_bound}:
\end{remark}
\begin{equation}
    b_{x,i} - h_{x,i}^T \bar{x}_k \ge \varepsilon_{margin} \label{eq:strict_margin_x}
\end{equation}

For active feedback $u_k = \bar{u}_k + K_k \tilde{\eta}_k$, bounding control energy against the worst-case ellipsoid $S_k$ is overly conservative. Instead, we bound against the active moment $Q_k$ via an auxiliary matrix $U_k$ and the Schur complement \cite{Boyd1994}:
\begin{equation}
    \begin{bmatrix} U_k & L_k \\ L_k^T & Q_k \end{bmatrix} \succeq 0 \quad \implies \quad K_k Q_k K_k^T \preceq U_k
    \label{eq:input_schur}
\end{equation}
Since linear projections preserve unimodality, the Gauss bound applies to the actuator channels with the same multiplier $\kappa_g$. Actuator limits enforce DC constraints with margins $m_{j,k}^u \ge \varepsilon_{margin}$ and slacks $s_{j,k}^u \ge 0$:
\begin{equation}
\resizebox{\columnwidth}{!}{$
    h_{u,j}^T U_k h_{u,j} \le \kappa_{g} \cdot \epsilon_{c,u,j} \Big[ (\hat{m}_{j,k}^u)^2 + 2 \hat{m}_{j,k}^u (m_{j,k}^u - \hat{m}_{j,k}^u) \Big] + s_{j,k}^u
$}
    \label{eq:ccp_var_u}
\end{equation}

\section{Sequential Convex Programming}

The S-LMI tube is computed iteratively via Successive Convexification (SCvx) \cite{Mao2016}. At each iteration, the nonlinear map $f_d$ is linearized about the current reference $(\hat{x}_k, \hat{u}_k)$, the envelope $\hat{\Lambda}_k^{\mathrm{tr}}$ is evaluated over $\mathcal{E}_k(\alpha_{\mathrm{tr}})$, and the CCP margins are updated from the previous nominal trajectory. The surrogate subproblem optimizes the active variables $\Theta = \{ \bar{x}_k, \bar{u}_k, Q_k, U_k, L_k, m_k \}$ and non-negative $L_1$ penalty slacks $\mathcal{S} = \{ v_{def}^{\pm}, s^x, s^u, s^{gnd}, v_{mkv}, E_{env} \}$:
\begin{align}
    \min_{\Theta, \mathcal{S}} \quad & \sum_{k=0}^{N-1} \Big( \omega_{\ell} \|\bar{u}_k\|_2^2 + \omega_s \|\bar{u}_k - \bar{u}_{k-1}\|_2^2 \nonumber \\
    &+ \omega_{prox} \|\bar{u}_k - \hat{u}_k\|_2^2 + \omega_u \text{Tr}(U_k) \Big) + \omega_d \sum \| \mathcal{S} \|_1 \label{eq:scvx_cost} \\
    \text{s.t.} \quad & \bar{x}_{k+1} = f_d(\hat{z}_k) + J_{x,k} (\bar{x}_k - \hat{x}_k) + J_{u,k} (\bar{u}_k - \hat{u}_k) \nonumber \\
    &\quad + v_{def,k}^+ - v_{def,k}^- \nonumber \\
    & \text{Constraints } \eqref{eq:slmi_anisotropic}, \eqref{eq:markov_exit}, \eqref{eq:ccp_var_x}, \eqref{eq:strict_margin_x}, \eqref{eq:input_schur}, \eqref{eq:ccp_var_u} \nonumber \\
    & R_{max}^2 \alpha_{tr} (\hat{Q}_k + \varepsilon_S I) - Q_k + E_{env,k} \succeq 0 \nonumber \\
    & \bar{x}_0 = \mu_0, \quad Q_0 = \Sigma_0, \quad \bar{x}_N = \mu_f \nonumber \\
    & \|\bar{x}_k - \hat{x}_k\|_{\infty} \le \Delta_{tr}^x, \quad \|\bar{u}_k - \hat{u}_k\|_{\infty} \le \Delta_{tr}^u \nonumber \\
    & \mathcal{S} \ge 0, \ Q_k \succ 0, \ U_k \succ 0, \ m_k > 0 \nonumber
\end{align}
The cost penalizes control effort ($\omega_\ell$), control smoothness ($\omega_s$), proximal deviation ($\omega_{prox}$), and feedback energy $\text{Tr}(U_k)$ ($\omega_u$), while the trace of $Q_k$ is not penalized to prevent artificial dispersion compression. The dynamic penalty weight $\omega_d$ is scaled by $\beta_\omega > 1$ after each iteration to drive the slacks toward zero.

Trajectory viability is evaluated via the acceptance ratio $\rho = \Delta J_{\mathrm{act}} / \Delta J_{\mathrm{pred}}$, where $\Delta J_{\mathrm{act}}$ and $\Delta J_{\mathrm{pred}}$ are the actual and predicted cost reductions between successive iterates. If $\rho < \rho_{\min}$, the update is rejected and trust regions are contracted by $\gamma_c$; if $\rho > \rho_{\max}$, trust regions are expanded by $\gamma_e$. Convergence is declared when the control change, kinematics defect, and chance constraint slacks fall below prescribed tolerances.

\begin{algorithm}[H]
\caption{S-LMI Synthesis}
\label{alg:scp}
\begin{algorithmic}[1]
\STATE \textbf{Initialize:} $\hat{x}_k, \hat{u}_k$ via deterministic trajectory optimization. $K_k = 0$, $\hat{Q}_k$ via open-loop Jacobian propagation. Set $\omega_d$, $\Delta_{tr}^x$, $\Delta_{tr}^u$.
\FOR{$i = 1$ \TO $i_{\max}$}
    \STATE Compute Jacobians $J_{x,k}, J_{u,k}$ at $(\hat{x}_k, \hat{u}_k)$.
    \STATE Compute $\hat{\Lambda}_k^{\mathrm{tr}}$ over $\mathcal{E}_k(\alpha_{\mathrm{tr}})$.
    \STATE Update CCP linearization points from $\hat{x}_k$.
    \STATE Solve surrogate SDP \eqref{eq:scvx_cost} $\rightarrow \bar{x}_k, \bar{u}_k, Q_k, L_k$.
    \STATE Evaluate $\rho = \Delta J_{\mathrm{act}} / \Delta J_{\mathrm{pred}}$.
    \IF{$\rho < \rho_{\min}$}
        \STATE Reject. $\Delta_{tr} \leftarrow \gamma_c\, \Delta_{tr}$, $\omega_d \leftarrow \beta_\omega\, \omega_d$.
    \ELSE
        \STATE Accept. $\hat{x}_k \leftarrow \bar{x}_k$, $\hat{u}_k \leftarrow \bar{u}_k$, $\hat{Q}_k \leftarrow Q_k$.
        \STATE If $\rho > \rho_{\max}$: $\Delta_{tr} \leftarrow \gamma_e\, \Delta_{tr}$; else contract.
        \IF{convergence tolerances met}
            \STATE \textbf{break}
        \ENDIF
    \ENDIF
\ENDFOR
\RETURN $K_k^{\star} = L_k Q_k^{-1}$, $\bar{u}_k^{\star}$, $Q_k^{\star}$.
\end{algorithmic}
\end{algorithm}

\section{Numerical Simulation}\label{sec:sim}
% \vspace{-0.031cm}
We evaluate the proposed S-LMI framework on an EDL-inspired double-integrator problem with gravitational acceleration, quadratic aerodynamic drag, and bilinear position-velocity coupling. We compare against the iterative covariance steering (iCS) method~\cite{Ridderhof2019}, which uses first-order Jacobian linearization with Gaussian covariance propagation and chance constraints. Both methods share the same SCP architecture and affine feedback policy; comparing against iCS isolates the effect of the Taylor remainder bounds. While the closest alternative baseline is the gPC-SCP approach \cite{nakka2022}, it differs with its open-loop synthesis and polynomial chaos propagation which would conflate the isolated effect of our remainder bounds with broader architectural differences. 

We used an LG Gram with an Intel i7-1260P and 16 GB RAM, and both trajectories were generated using Clarabel \cite{Clarabel_2024} and CVXPY \cite{cvxpy}.

% We evaluate the proposed S-LMI framework on an EDL-inspired double-integrator problem with quadratic aerodynamic drag, and bilinear position-velocity coupling. We compare against the iterative covariance steering (iCS) method~\cite{Ridderhof2019}. We used an LG Gram with Intel i7-1260P and 16 GB RAM, and both trajectories were generated using Clarabel \cite{Clarabel_2024} and CVXPY \cite{cvxpy}.

% \begin{remark}[Baseline selection]
%     Both iCS and the proposed S-LMI share the same SCP architecture, first-order Jacobian linearization, and affine state-feedback synthesis; the sole difference is the treatment of the Taylor remainder. The closest alternative baseline, gPC-SCP \cite{nakka2022}, differs structurally by synthesizing open-loop controls and propagating uncertainty via polynomial chaos expansion. Comparing against gPC-SCP would conflate the benefits of our remainder bounding mechanism with broader differences in control architecture.
% \end{remark}
% comparing against iCS isolates the effect of the Taylor remainder bounds without introducing confounding variables from fundamentally different uncertainty propagation schemes (e.g., unscented transforms, generalized polynomial chaos).

\subsection{System Dynamics}

The state $x = [\xi_1, \xi_2, v_1, v_2]^T \in \mathbb{R}^4$ consists of lateral position~$\xi_1$, altitude~$\xi_2$, and their velocities. The thrust input is $u = [u_1, u_2]^T \in \mathbb{R}^2$. The continuous-time dynamics are
\begin{equation}\label{eq:edl_dyn}
\dot{x} = \begin{bmatrix} v_1 \\ v_2 \\ u_1 - c_d \|v\| v_1 + \alpha_1 \xi_1 v_1 \\ u_2 - g - c_d \|v\| v_2 + \alpha_2 \xi_2 v_2 \end{bmatrix}
\end{equation}
where $g = 1.0$, $c_d = 0.005$, and $\alpha_1 = 0.03$, $\alpha_2 = 0.01$ are bilinear coupling strengths. Additive process noise with intensity $\gamma = 0.015$ enters the velocity channels, with $W_k$ computed via Van~Loan. The map $f_d$ uses RK4 with 10 sub-steps.

\subsection{Problem Setup}

The planning horizon is $t_f = 12$\,s discretized into $N = 25$ steps. Boundary conditions are $\bar{x}_0 = [1.0,\, 15.0,\, 2.3,\, {-}1.0]^T$, $\bar{x}_f = [1.0,\, 0.0,\, 0.0,\, 0.0]^T$, with initial covariance $\Sigma_0 = 0.025\, I_4$ and terminal covariance bound $\Sigma_f = 0.05\, I_4$. Control inputs are bounded by $|u_j| \leq 2.0$. Two spatial chance constraints define a descent corridor: lateral walls $\Pr(|\xi_1| \leq 3.8) \geq 0.95$ and ground $\Pr(\xi_2 \geq {-}0.2) \geq 0.95$ at each interior step ($k = 1,\dots,N{-}1$). The risk budget is $\epsilon_{\mathrm{exit}} = 0.01$, $\epsilon_c = 0.04$.

% Both are enforced via the CCP procedure with the Gauss unimodal multiplier $\kappa_g = 9/4$ and margin $\varepsilon_{\mathrm{margin}} = 0.2$. The total risk budget $\epsilon_{\mathrm{total}} = 0.05$ is split into exit risk $\epsilon_{\mathrm{exit}} = 0.01$ and spatial budget $\epsilon_c = 0.04$.

% The uncertainty extraction matrix is $E_c E_c^T = \mathrm{diag}(10^{-4},\, 10^{-4},\, 1,\, 1)$, reflecting that the nonlinear remainder affects velocity channels only (position kinematics $\dot{\xi} = v$ and gravity are linear). The Markov per-step exit bound requires $R^2_{\max}\alpha_{\mathrm{tr}} \geq n_x N / \epsilon_{\mathrm{exit}} = 10{,}000$. For the dynamics~\eqref{eq:edl_dyn}, $\hat{\Lambda}_k^{\mathrm{tr}}$ is computed analytically. The Hessian of each bilinear coupling $\alpha_i \xi_i v_i$ is exactly $\alpha_i$ (constant, state-independent). The Hessian of the quadratic drag $c_d \|v\| v_i$ is bounded by $3 c_d = 0.015$. Gravity contributes zero to the Hessian (constant term). The per-channel continuous-time bounds are $\alpha_1 + 3c_d = 0.045$ (channel~1) and $\alpha_2 + 3c_d = 0.025$ (channel~2). The discrete-time bound follows from Gronwall amplification. This closed-form computation requires no sampling or safety factors. Algorithm~1 is initialized with $K_k = 0$.
The uncertainty extraction matrix is $E_c E_c^T = \mathrm{diag}(10^{-4},\, 10^{-4},\, 1,\, 1)$, and $R^2_{\max}\alpha_{\mathrm{tr}} = 10{,}000$. The envelope $\hat{\Lambda}_k^{\mathrm{tr}}$ is computed analytically: the Hessian of $\alpha_i \xi_i v_i$ is exactly $\alpha_i$ (constant), the drag Hessian is bounded by $3 c_d = 0.015$, and gravity contributes zero. Per-channel bounds are $\alpha_1 + 3c_d = 0.045$ and $\alpha_2 + 3c_d = 0.025$, amplified via Gronwall for the discrete-time map. Algorithm~\ref{alg:scp} is initialized with $K_k = 0$, and uses SCvx parameters $\rho_{\min} = 0.05$, $\rho_{\max} = 0.7$, $\gamma_c = 0.5$, $\gamma_e = 1.2$, $\beta_\omega = 1.2$.

% \paragraph{Analytical envelope $\hat{\Lambda}_k^{\mathrm{tr}}$.}

\subsection{Results}

Table~\ref{tab:metrics} summarizes the $5{,}000$ Monte carlo runs. S-LMI converges in $7$ SCP iterations ($215$\,s) without any rejects; iCS converges in $6$ iterations ($719$\,s). Fig.~\ref{fig:traj_comparison} shows the position-space trajectories. Under iCS (Fig.~\ref{fig:traj_comparison}a), trajectories breach both the lateral wall and the ground constraint. The Gaussian covariance ellipses remain small because the quantile constraint compresses the predicted variance, but the actual dispersion exceeds these predictions near the ground where the bilinear dynamics produce non-Gaussian spreading. Under S-LMI (Fig.~\ref{fig:traj_comparison}b), all trajectories remain within both constraint boundaries, and the $Q_k$ bounds contain the Monte Carlo scatter. Fig.~\ref{fig:states_controls} shows that both controllers produce similar nominal trajectories and control profiles, with comparable thrust authority within the $|u_j| \leq 2$ bound. The S-LMI $Q_k$ envelope (shaded) envelops the closed-loop spread in all channels.

\begin{figure}[t]
  \centering
  \includegraphics[width=\linewidth]{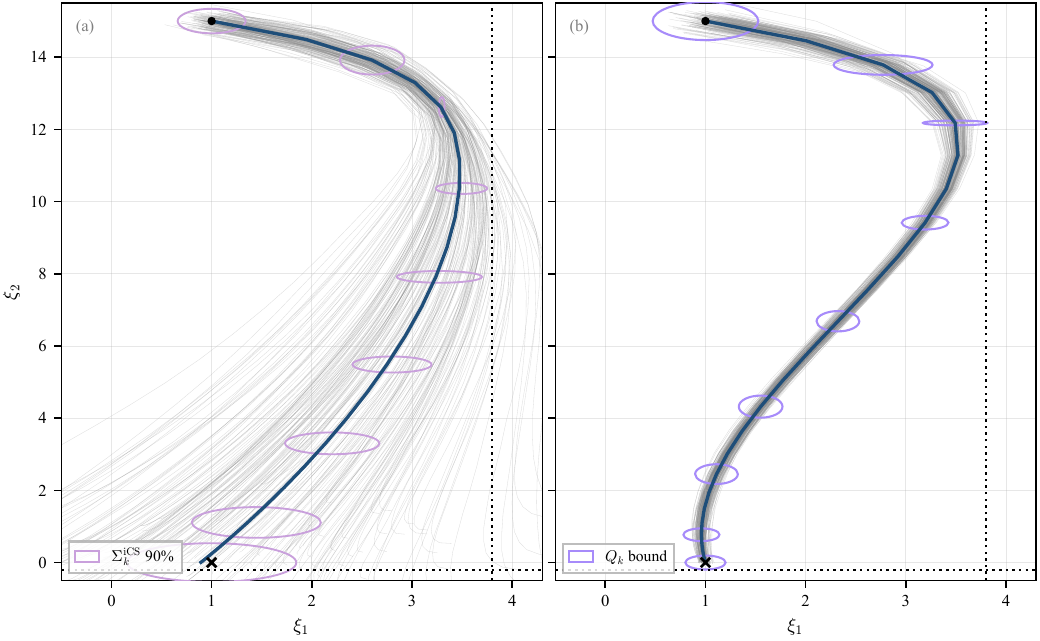}
  \caption{Monte Carlo trajectories ($5{,}000$ runs). (a)~iCS with $\Sigma_k^{\mathrm{iCS}}$ ellipses ($90\%$). (b)~S-LMI with $Q_k$ bounds. Dotted lines: $|\xi_1| \leq 3.8$, $\xi_2 \geq -0.2$.}
  \label{fig:traj_comparison}
\end{figure}

\begin{figure}[t]
  \centering
  \includegraphics[width=\linewidth]{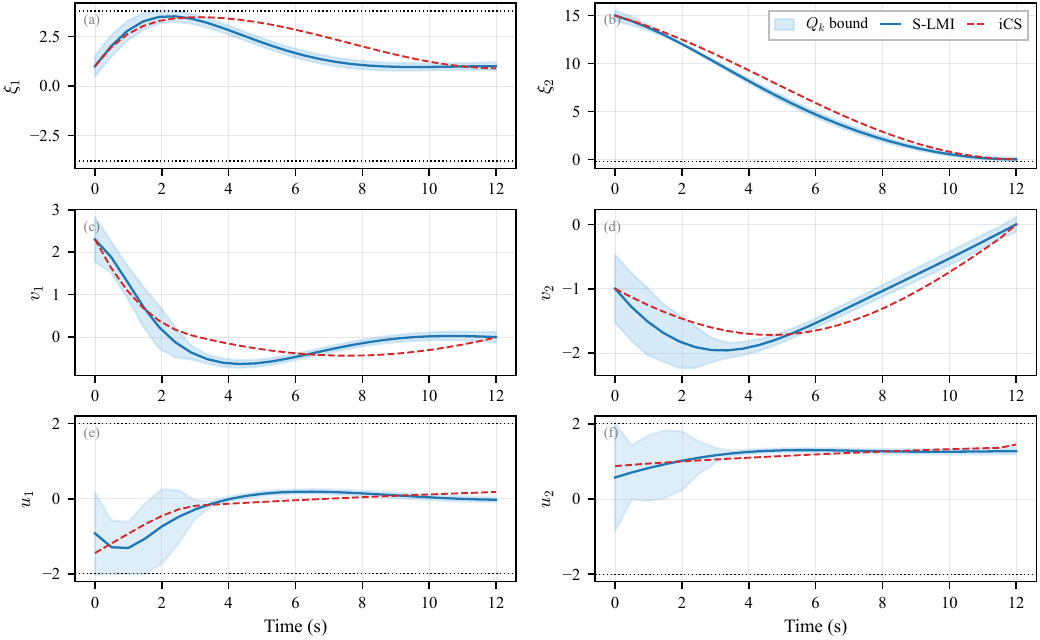}
  \caption{State (a--d) and control (e--f) profiles. Blue: S-LMI; red dashed: iCS; shaded: $Q_k$ bound.}
  \label{fig:states_controls}
\end{figure}

Fig.~\ref{fig:viol_cov}a shows the per-step violation rate. The iCS empirical violation reaches $40.3\%$ at the terminal step while its Gaussian prediction peaks at $21.1\%$; the factor-of-two discrepancy reflects the Gaussian tail model's limitation under the bilinear-gravitational dynamics. S-LMI maintains violations at $0.04\%$. Fig.~\ref{fig:viol_cov}b confirms that $\mathrm{Tr}(Q_k)$ upper-bounds the empirical Monte Carlo trace at every step with a tightness ratio of $1.01\times$.

\begin{figure}[t]
  \centering
  \includegraphics[width=\linewidth]{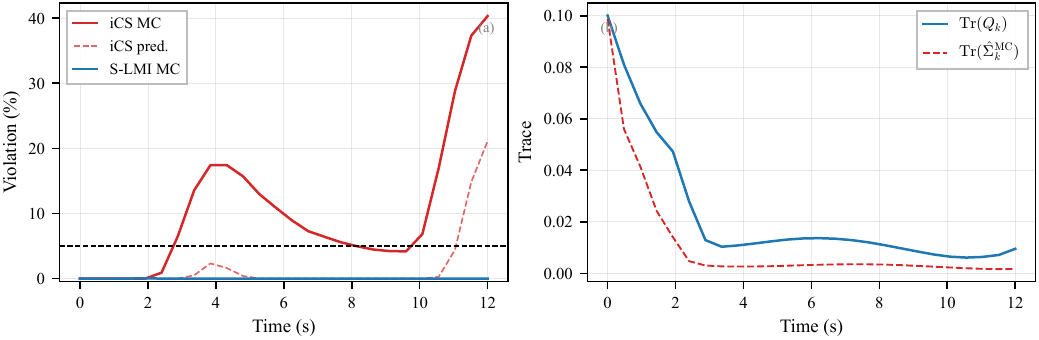}
  \caption{(a)~Per-step violation rate; dashed black: $5\%$ target. (b)~$\mathrm{Tr}(Q_k)$ vs.\ empirical MC trace (S-LMI only).}
  \label{fig:viol_cov}
\end{figure}

% Fig.~\ref{fig:effort_gains} compares per-step control effort and feedback gains. Both methods follow similar effort profiles with comparable totals ($41.5$ vs.\ $40.0$), confirming that S-LMI achieves constraint satisfaction through the remainder bounding mechanism rather than additional control authority. S-LMI uses higher gains during the initial descent to establish the covariance bound, then settles to levels comparable to iCS; peak gains are similar ($3.17$ vs.\ $3.40$).
Fig.~\ref{fig:effort_gains} show comparable control effort (41.5 vs. 40.0) and peak feedback gains (3.17 vs. 3.40); this shows that constraint satisfaction comes from the remainder bounding mechanism rather than additional control authority.

\begin{figure}[t]
  \centering
  \includegraphics[width=\linewidth]{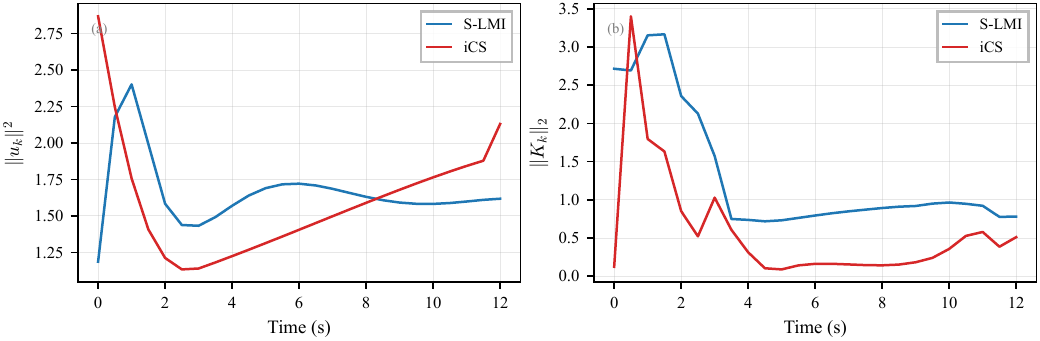}
  \caption{(a)~Per-step control effort $\|u_k\|^2$. (b)~Feedback gain $\|K_k\|_2$.}
  \label{fig:effort_gains}
\end{figure}

\begin{table}[t]
  \centering
  \caption{S-LMI vs.\ iCS ($5{,}000$ Monte Carlo trials).}
  \label{tab:metrics}
  \begin{tabular}{lcc}
    \hline
    Metric & iCS & S-LMI \\
    \hline
    Max MC violation (\%) & 40.3 & 0.04 \\
    Control effort $\sum\|u_k\|^2$ & 40.0 & 41.5 \\
    Max $\|K_k\|_2$ & 3.40 & 3.17 \\
    Total run time (s) & 719 & 215 \\
    Iterations & 6 & 7 \\
    Max-trace ratio & -- & $1.01\times$ \\
    Terminal mean error & 0.050 & 0.001 \\
    \hline
  \end{tabular}
\end{table}

\section{Conclusion}
This paper presented an SCP framework for chance-constrained nonlinear covariance control that bounds the Taylor remainder via Petersen's lemma, propagating an upper bound on the uncentered second moment without distributional assumptions. On a corridor descent problem with bilinear coupling and gravity, the S-LMI achieves $0.04\%$ empirical violation with $1.01\times$ bound tightness and comparable control effort to the iCS baseline, completing in $7$ SCP iterations.

The envelope $\hat{\Lambda}_k^{\mathrm{tr}}$ was computed analytically, general dynamics require systematic computation via interval arithmetic and automatic differentiation. Future work includes optimizing per-step risk allocation, exploiting Hessian sparsity through structured $\delta QC$ multipliers, and validation on entry or powered descent guidance problems.

% This paper presented a discrete-time sampled-data SCP framework for chance-constrained optimal control of nonlinear dynamic systems. By isolating the continuous remainder of the deterministic numerical one-step map and modeling it as an unstructured uncertainty block, the proposed methodology constructs a causal S-LMI that upper-bounds the expected uncentered second moment of the state. Analytical domain-exit risk bounding via Markov's inequality and mean-free DC constraints ensure chance constraint satisfaction without relying on certainty-equivalence Gaussian assumptions. Numerical simulations on a cross-coupled nonlinear trajectory optimization problem confirm that the framework achieves strict constraint compliance, outperforming standard Gaussian propagation methods.

\bibliographystyle{IEEEtran}
\bibliography{references.bib}

\end{document}